\numberwithin{equation}{section}
\definecolor{Mygrey}{gray}{0.8}
\newcommand{\bea}{\begin{eqnarray}}
\newcommand{\eea}{\end{eqnarray}}
\newcommand{\be}{\begin{eqnarray*}}
\newcommand{\ee}{\end{eqnarray*}}
\newtheorem{theorem}{Theorem}[section]
\newtheorem{algorithm}{Algorithm}[section]
\newtheorem{remark}{Remark}[section]
\begin{document}
\title[Detecting Fixed Points of Boolean Networks]{An Algorithm for Detecting Fixed Points of Boolean Networks}
\author[Yi Ming Zou]{Yi Ming Zou}
\address{Department of Mathematical Sciences, University of Wisconsin-Milwaukee, Milwaukee, WI 53201, USA} \email{ymzou@uwm.edu}
\begin{abstract}
In the applications of Boolean networks to modeling biological systems, an important computational problem is the detection of the fixed points of these networks. This is an NP-complete problem in general. There have been various
attempts to develop algorithms to address the computation need for large size Boolean networks. The existing methods are usually based on known algorithms and thus limited to the situations where these known algorithms can apply. In this paper, we propose a novel approach to this problem. We show that any system of Boolean equations is equivalent to one Boolean equation, and thus it is possible to divide the polynomial equation system which defines the fixed points of a Boolean network into subsystems that can be solved easily. After solving these subsystems and thus reducing the number of states, we can combine the solutions to obtain all fixed points of the given network. This approach does not depend on other algorithms and it is straightforward and easy to implement. We show that our method can handle large size Boolean networks, and demonstrate its effectiveness by using MAPLE to compute the fixed points of Boolean networks with hundreds of nodes and thousands of interactions.  
\end{abstract}
\maketitle
\date{}
\section{Introduction}
\par
\par
Boolean networks were introduced in \cite{Kauf69} as random models of genetic regulatory networks to study biological systems. A recent research focus of Boolean networks is to develop theories and algorithms to address questions arise from biological applications \cite{Cho06}, \cite{Col04}, \cite{Dev03}, \cite{Kauf04}-\cite{Pal05}, \cite{Tamu09}, \cite{Zou10}.  To aid the study of complex biological systems, where experiments are usually expensive and time consuming, researchers use mathematical models built based on partial experimental information of these biological systems. Boolean networks offer relatively simpler such models which are capable of capturing some of the key dynamical properties \cite{Bonn08},\cite{Kar08}, such as the stable states, of the underlying systems. As discrete time finite state dynamical systems, Boolean networks will eventually revert to certain sets of states called attractors. These attractors encode the long term behaviors of the underlying biological systems, and can be divided into two categories: stable states (fixed points) and cyclic states. The purpose of this paper is to develop an effective method for detecting the fixed points of these networks. 
\par
Different approaches for the detection of fixed points of Boolean networks exist in the literature. In \cite{Zhan07}, an approach which is search/recursive in nature was given. According to \cite{Zhan07}, the proposed algorithms can identify all fixed points of a random Boolean network with maximum indegree $2$ (the number of variables that each nodes depends on) with an average time $O(1.19^n)$ ($n$ is the number of variables the whole Boolean network depends on, which is also the number of nodes). In the worst case, however, it can take up to time $nO(2^n)$. Another approach, which is based on the $k$-satisfiability problem related algorithms and methods, has been developed in several recent publications (see \cite{Dev03}, \cite{Tamu09}, and the references therein). In \cite{Dev03}, the result of applying algorithms of solving constraint satisfaction problems to the detection of fixed points of some randomly generated Boolean networks was reported. According to \cite{Dev03}, this method performs well for Boolean networks with indegree $\le 2$, and the computation will be exponential with indegree $> 2$. This is because there exist polynomial time algorithms for the $k = 2$ satisfiability problem, but the satisfiability problem is NP-complete for $k>2$ \cite{Zhan07}, \cite{Tamu09}. According to \cite{Tamu09}, the algorithm there can detect a fixed point of an AND/OR (only one of these operations is allowed for each node) Boolean network with non-restricted indegree in time $O(1.787^n)$. The satisfiability problem concerns whether or not there is a solution, not how to find all solutions. Thus further developments are needed before these algorithms can find more real applications. A computational algebra approach to the theory of dynamical systems over finite fields was developed in \cite{Col04}, \cite{Jar07}, \cite{Laub04}. A key concept of computational algebra, the Gr\"{o}bner bases of a polynomial system, can also be employed in the detection of the fixed points of a Boolean network by first compute a Gr\"{o}bner basis of the polynomial system that defines the fixed points, and thus make the system easier to solve. Though the computation of a Gr\"{o}bner basis of a polynomial system over a finite field is faster than the computations over the real or the complex numbers, especially for the Boolean case, our capability to perform such a computation is still rather limited, due to the fact that the computation of a Gr\"{o}bner basis may require time that is doubly-exponential. 
\par
There are also publications concerning the connection between fixed points and the topology structures of Boolean networks. The fact that genetic networks with canalyzing Boolean rules are always stable was reported in \cite{Kauf04}, and these Boolean networks were subsequently studied by \cite{Jar2}. The problem of when a Boolean network in which all the up-dating rules are defined by monomials is a fixed point system was investigated in \cite{Col04}. In \cite{Cho06}, via minimizing a cost function over a family of Boolean networks having a common set of fixed points, the intervention in a family of Boolean networks was studied. In \cite{Xiao07}, the impact of function perturbations to a Boolean network's fixed points in the form of a one-bit change of the truth table was investigated. In \cite{Zou10}, the consistency of partial information on a Boolean network and a given set of fixed points was considered, and a testable necessary and sufficient condition for consistency was derived. Some discussions on the effects of topology of Boolean networks to their long term behaviors can be found in \cite{NL06}, \cite{Oik06}, \cite{Pome09}.
\par
The main result of this paper is a method for solving systems of Boolean equations arise from applications to biological systems. It is known that, though there are hardly any biological networks (for example, gene regulatory networks) with each and every node depends on $\le 2$ other nodes, these networks are also not densely connected. There can be nodes with many connections, but most nodes depend on a few other nodes. We have observed that, though it is impossible to treat large size Boolean network using the exhaustive enumeration method since a Boolean network with $n$ nodes has $2^n$ states, with today's standard home and office PCs, the computing time for solving a system of Boolean equations for $n<25$ variables, even using exhaustive enumeration, is rather short. For example, if $n = 20$, then the computation usually takes about $10$ seconds (see examples later). Therefore, if we can divide a system of Boolean equations into subsystems {\it according to the number of variables involved} (different subsystems can have common variables) such that each subsystem can be solved easily, say by using the exhaustive enumeration method, then by patching the solutions of the subsystems together, we should be able to find all the solutions. This turned out working quite well, since biological Boolean networks usually permit such a division, and the number of solutions of a subsystem with $t$ variables is $\ll 2^t$. So by {\it solving the subsystems first, we can reduce a seemingly intractable enumeration problem to a feasible one}.  This approach is straightforward, does not rely on any other algorithm, and is capable of solving large systems.  This method also applies to Boolean networks which have not been considered in the literature so far (to the best of the author's knowledge). For example, it applies to ``community-like'' networks, i.e. those networks where the nodes in each community (with reasonable size) can be densely connected while the communities of the network are sparsely connected, these networks can have the average number of connections of each node $> 2$.  
\par

\section{Theory and Algorithm}
\par
A Boolean network with $n$ nodes can be given by a Boolean polynomial function
\bea\label{e1}
\mathbf{f}=(f_1,\ldots,f_n):\{0,1\}^n\rightarrow\{0,1\}^n,
\eea
where $\{0,1\}^n$ is the state space of all sequences of length $n$ formed by $0$ and $1$, and $f_1,\ldots,f_n$ are Boolean polynomials in $n$ variables $x_1,\ldots, x_n$. We can use either the logical operations {\bf OR} ($\vee$), {\bf AND} ($\wedge$), and {\bf NOT} ($\neg$), or the modulo $2$ arithmetic operations addition and multiplication, to perform the calculations for Boolean variables and polynomials. The correspondences are given by:
\be
x_i\wedge x_j = x_ix_j,\;\;
x_i\vee x_j = x_i+x_j+x_ix_j,\;\; \neg x_i = x_i+1.
\ee
\par
To study the dynamical properties of a Boolean network, we consider the time-discrete dynamical system defined by:
\be
\mathbf{f}:(x_1(t),\ldots,x_n(t))\mapsto (x_1(t+1),\ldots,x_n(t+1)).
\ee
That is, the functions $f_i,\;1\le i\le n$, give the updating rules for the nodes, and the state of the $i$th node at time $t+1$ is given by the function value $f_i(x_1(t),\ldots,x_n(t))$. For gene regulatory networks, the variables $x_1,\ldots,x_n$ represent the genes and the functions $f_1,\ldots,f_n$ give the gene regulatory rules. If $x_i=1$, then the corresponding gene is expressed ({\bf ON}); and if $x_i=0$, then the gene is not expressed ({\bf OFF}). 
\par
The {\it state space graph} of a Boolean network $\mathbf{f}$ is a directed graph with the vertices (states) given by the set $\{0,1\}^n$, and with the directed edges defined by the function $\mathbf{f}$: there is a directed edge from vertex $\mathbf{v}_1$ to vertex $\mathbf{v}_2$ if the value of $\mathbf{f}$ at $\mathbf{v}_1$ is $\mathbf{v}_2$. The {\it dependency graph} of $\mathbf{f}$ is a directed graph with $n$ nodes such that there exists a direct edge from node $i$ to node $j$ ($i=j$ is allowed) if and only if $f_j$ depends on the variable $x_i$. 
\par
A state $\mathbf{x} = (x_1,x_2,\ldots,x_n)\in \{0,1\}^n$ is a fixed point of $\mathbf{f}$ if it is a solution of the system of equations 
\bea\label{e2}
f_i(x_1,x_2,\ldots,x_n) = x_i,\quad 1\le i\le n.
\eea
To describe our method, we change the above system of equations to a different equivalent form. We consider the set of Boolean polynomials
\bea\label{e3}
g_i := f(x_1,x_2,\ldots,x_n)+ x_i+1, \quad 1\le i\le n,
\eea
and let
\bea\label{e4}
m_{\mathbf{f}} = \prod_{i=1}^{n}g_i.
\eea
Let $[1,n]:=\{1,2,\ldots,n\}$. If $A\subseteq [1,n]$, we write
\bea\label{e5}
m_A = \prod_{i\in A}g_i.
\eea
Recall that a set $\{A_j\;|\;1\le j\le k\}$ of nonempty subsets of $[1,n]$ is a {\it partition} of $[1,n]$ if 
\be
\bigcup_{j=1}^kA_j = [1,n]\quad\mbox{and}\quad A_s\cap A_t = \emptyset,\;\forall s\ne t.
\ee
\par
We can now state the following theorem.
\begin{theorem} Let $\mathbf{f}$ be defined by (\ref{e1}) and let $\{A_j\;|\;1\le j\le k\}$ be a partition of $[1,n]$. Then a state $\mathbf{a} = (a_1,a_2,\ldots,a_n)\in \{0,1\}^n$ is a fixed point of $\mathbf{f}$ if and only if
\bea
m_{A_j}(a_1,a_2,\ldots,a_n) = 1,\quad\forall\; 1\le j\le k. 
\eea
In particular, $\mathbf{a}$ is a fixed point of $\mathbf{f}$ if and only if $m_{\mathbf{f}}(\mathbf{a})=1$.
\end{theorem}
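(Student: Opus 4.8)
The plan is to reduce the statement to a pointwise computation over $\mathbb{F}_2$, using that every relevant quantity, once evaluated at a state, lies in $\{0,1\}$. The first step is the elementary identity that, for any state $\mathbf{a}=(a_1,\ldots,a_n)\in\{0,1\}^n$ and any $i\in[1,n]$, one has $g_i(\mathbf{a})=f_i(\mathbf{a})+a_i+1$, and since the arithmetic is modulo $2$ this value equals $1$ exactly when $f_i(\mathbf{a})+a_i=0$, i.e. exactly when $f_i(\mathbf{a})=a_i$ (recall $-a_i=a_i$ in $\mathbb{F}_2$). Hence $g_i(\mathbf{a})=1$ if and only if the $i$-th defining equation (\ref{e2}) holds at $\mathbf{a}$, and $g_i(\mathbf{a})=0$ otherwise.

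The second step is the observation that a product $\prod_{i\in A}c_i$ of elements $c_i\in\{0,1\}$ equals $1$ if and only if every factor is $1$: if some factor is $0$ the product vanishes, and if all factors are $1$ the product is $1$. Taking $c_i=g_i(\mathbf{a})$ and combining with the first step yields, for each block $A_j$ of the partition, that $m_{A_j}(\mathbf{a})=1$ if and only if $f_i(\mathbf{a})=a_i$ for every $i\in A_j$.

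To conclude, I would invoke the defining property of a partition, namely $\bigcup_{j=1}^{k}A_j=[1,n]$: therefore $m_{A_j}(\mathbf{a})=1$ for all $j$ holds if and only if $f_i(\mathbf{a})=a_i$ for every $i$ lying in some block, that is, for every $i\in[1,n]$, which is precisely the condition that $\mathbf{a}$ be a fixed point of $\mathbf{f}$. The final assertion follows by applying this to the trivial partition with the single block $[1,n]$, for which $m_{[1,n]}=m_{\mathbf{f}}$. There is no real obstacle in this argument; one notes only that disjointness of the blocks plays no role (merely that their union is $[1,n]$), and that the whole equivalence hinges on working over $\mathbb{F}_2$ and restricting to $\{0,1\}$-valued evaluations, where a product is nonzero precisely when no factor is zero.
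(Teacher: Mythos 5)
Your proof is correct and is essentially the paper's own argument: both rest on the two facts that $g_i(\mathbf{a})=1$ exactly when $f_i(\mathbf{a})=a_i$, and that a product of $\{0,1\}$-values equals $1$ if and only if every factor does. The only cosmetic differences are that the paper phrases the first fact via De Morgan, writing $\neg\bigl(\bigvee_{i=1}^n(f_i+x_i)\bigr)=\bigwedge_{i=1}^n(f_i+x_i+1)=\prod_{i=1}^n g_i$, and obtains the partition statement by factoring $m_{\mathbf{f}}=\prod_{j=1}^k m_{A_j}$, whereas you argue pointwise and blockwise and correctly observe that only the covering property $\bigcup_j A_j=[1,n]$ (not disjointness) is actually needed.
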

\begin{proof} The system of equations given by (\ref{e2}) is equivalent to
\be
\bigvee_{i=1}^n(f_i+x_i) = 0,
\ee
which in turn is equivalent to
\be
1 = \neg(\bigvee_{i=1}^n(f_i+x_i)) = \bigwedge_{i=1}^n(f_i + x_i + 1)= \prod_{i=1}^ng_i=m_{\mathbf{f}}.
\ee
If $\{A_j\;|\;1\le j\le k\}$ is a partition of $[1,n]$, then 
\be
m_{\mathbf{f}} = \prod_{j=1}^km_{A_j},
\ee
so $m_{\mathbf{f}}=1$ if and only if all $m_{A_j}=1$.
\end{proof}
\par
\begin{remark}
By using the above argument, one can convert a satisfiability problem to a problem of finding a fixed point of a Boolean network, and vice versa. This implies immediately that detecting a fixed point of a Boolean network is an NP-complete problem.
\end{remark}
\begin{remark}
The above theorem also implies that any system of Boolean equations is equivalent to a single Boolean equation. 
\end{remark}
As a consequence of the above theorem, we have the following procedure of detecting the fixed points of a Boolean network (i.e. an algorithm for solving a system of Boolean equations).
\par
\begin{algorithm}\label{a1} Boolean network fixed points detection algorithm.
\par
\par\medskip
{\bf INPUT:} A Boolean network $\mathbf{f}=(f_1,f_2,\ldots,f_n)$ defined as in (\ref{e1}).
\par
{\bf OUTPUT:} Fixed points of $\mathbf{f}$.
\par
\medskip
1. Choose a threshold level $T$ (a positive integer) such that any Boolean equation with the number of variables $\le T$ can be solved easily.  
\par
2. Set $g_i = f_i+x_i+1,\;1\le i\le n$. Simplify the system (reduce the number of variables) using obvious relations such as $f_i = x_j$ or $f_i = x_j+1$ (for $i\ne j$) by making the substitutions $x_i=x_j$ or $x_i=x_j+1$ into the $g_i$'s. 
\par
3. Divide $[1,n]$ into subsets $\{A_j\;|\;1\le j\le k\}$ such that for each $1\le j\le k$, the number of variables involved in the subsystem $\{g_i\;:\; i\in A_j\}$ is $\le T$ (but as close to $T$ as possible), and solve each subsystem separately. 
\par
4. Combine the solutions of each subsystem to obtain the fixed points of $\mathbf{f}$.  
\end{algorithm}
\par
\begin{remark} Note that the threshold level $T$ depends on the hardware and the method employed to solve these equations. For exhaustive enumeration method on standard PCs, we can use $T=21$. Note that different subsystems are allowed to have common variables, and for each $1\le j\le k$, one can just solve $m_{A_j}=1$ (or $m_{A_j}+1=0$). Note also that parallel computation can be used in both step 3 and step 4.
\end{remark}
\par
{\it Algorithm analysis}. It is clear that the success of the above algorithm depends on whether the whole system can be divided into subsystems according to the threshold level such that the number of subsystems ($k$) is relatively small compare to the total number of nodes ($n$). For example, this will not be the case if the dependency graph is a complete graph. As mentioned in the introduction, biological networks as well as community-like networks can be divided. Basically, Boolean networks with small average connections, for example $\le 5$, can always be divided, but those with average connections $> 5$ may or may not be divisible depending on the actual networks and the method employed to solve them. Assume that exhaustive enumeration is used to solve the subsystems. From the actual gene regulatory networks in the literature, we can assume that, with the threshold level $T = 21$, the average number of equations in each subsystem is between $20$ and $30$ (see examples in the next section). If solving one of these subsystems takes about $10$ seconds, then the total time of solving these subsystems is approximately equal to $n/2$ seconds. So the computation time is up to the time needed for combining the solutions of these subsystems. This depends on the number of fixed points of the Boolean network $\mathbf{f}$. In general, the more fixed points $\mathbf{f}$ has, the longer the computation (compare the examples in the next section), since if $\mathbf{f}$ has a large number of fixed points, then even verifying that all these points are fixed points could be a problem. 
\par
\section{Examples}
\par
In this section, we present several examples for our algorithm. The first three are gene regulatory networks from the references. The last two were simulated based on the gene regulatory networks published in the literature. The subsystems were solved using exhaustive enumerations. All computations were done using MAPLE $11$ on a Dell laptop with the system: Intel(R)Core(TM)2 Duo CPU T9900@3.06GHz with 3.5 GB RAM.
\par\medskip
{\bf Example 3.1.} Our first example is the gene regulatory network published in \cite{Albe03}. This Boolean network models the expression pattern of the segment polarity genes in the fruit fly {\it Drosophila melanogaster} and has $21$ nodes. The polynomial system is given in the Appendix. There is no need to divide the system, after step 2 of Algorithm \ref{a1}, the resulted equation to be solve is
{\tiny
\be
{} &{}& 1=((x_{15}+1)*(x_{1}*(x_{2}+x_{14})+x_{2}*x_{14})+x_{2}+1)\\
{} &{}&*(x_{1}*(x_{16}*(x_{17}+1)+x_{17})+x_{16}*(x_{17}+1)+x_{17}+x_{4}+1)\\
{} &{}&*(x_{4}*(x_{15}+1)+x_{6}+1)\\
{} &{}&*((x_{4}+1)*((x_{11}+1)*(x_{20}*(x_{21}+1)+x_{21})+x_{11})+x_{8}+1)\\
{} &{}&*((x_{8}+1)*x_{9}*(x_{18}+1)*(x_{19}+1)+x_{8}+x_{9}+1)\\
{} &{}&*(((x_{8}+1)*x_{9}*(x_{18}+1)*(x_{19}+1)+x_{8})*(x_{20}*(x_{21}+1)+x_{21})+x_{10}+1)\\
{} &{}&*((x_{8}+1)*x_{9}*(x_{18}+1)*(x_{19}+1)+x_{8}+((x_{8}+1)*x_{9}\\
{} &{}&*(x_{18}+1)*(x_{19}+1)+x_{8})*(x_{20}*(x_{21}+1)+x_{21})+x_{11}+1)\\
{} &{}&*((x_{4}+1)*((x_{11}+1)*(x_{21}+1)*(x_{20}+1)+1)+x_{14}+1)\\
{} &{}&*((x_{4}+1)*((x_{11}+1)*(x_{21}+1)*(x_{20}+1)+1)+x_{4}+x_{15}).
\ee}
The computation for solving this equation took $0.54$ second, and $176$ fixed points were detected (see supplement MAPLE worksheet).
\par
\medskip 
{\bf Example 3.2.} This example is the T-LGL survival signaling Boolean network given by the diagram of Fig. 2B in \cite{ZhangR08}. This network has $29$ nodes (see Appendix). The equation we obtained after step 2 of Algorithm \ref{a1} is 
{\tiny
\be
{}&(x_{1}+x_{7}+1)*(x_{1}+x_{9}+x_{1}*x_{9}+x_{8}+1)*(x_{1}+x_{9}+x_{1}*x_{9}+x_{12})\\
{}&*(x_{15}*x_{1}*x_{9}+x_{15}*x_{1}+x_{15}*x_{9}+x_{18})*(x_{18}*(x_{1}+1)+x_{20}+1)\\
{}&*(x_{1}+x_{9}+x_{1}*x_{9}+x_{13})*(x_{9}+x_{15}*x_{9}+1)*(x_{9}+x_{28}+1) = 1.
\ee}
The computation for solving this equation took $0.45$ second, and $6$ fixed points were detected (see supplement MAPLE worksheet).
\par
\medskip
{\bf Example 3.3.} This example is the T-Cell receptor signaling Boolean model given by Fig. 1 in \cite{SR07}. This network has $90$ nodes. We derived the Boolean polynomial functions of the nodes according to the interactions given in the diagram (see Appendix). After step 2 of Algorithm \ref{a1}, with the number of variables threshold at $< 21$, the entire system of equations that defines the fixed points was divided into $3$ subsystems. Subsystem 1 involves $20$ variables and $28$ equations, which was solved in $10.32$ seconds. Subsystem 2 involves $20$ variables and $42$ equations, which was solved in $9.99$ seconds. Subsystem 3 involves $13$ variables and $20$ equations, which was solved in $0.40$ second. Putting the solutions of these subsystems together took $2.13$ seconds. A total of $4096$ fixed points were detected (see supplement MAPLE printout).
\par
\medskip
{\bf Example 3.4.} This is a simulated Boolean network with $228$ nodes. The dependency graph (Fig. 1) is generated from the polynomial functions (see supplement MAPLE printout). With threshold at $\le 21$, after step 2 of Algorithm \ref{a1}, the fixed point system of equations was divided into $7$ subsystems. The solutions of subsystems 1 to 4 were combined first, then the solutions of subsystems 5 to 7 were combined, and finally the resulted two sets of solutions were combined to obtain the fixed points. The total computation time was approximately $4.8736$ hours and a total of $25165824$ fixed points were detected. The majority of the computation time was used to combine the solutions of subsystems 1 to 4 with the solutions of subsystems 5 to 7.
\par
\begin{figure}[h]
\begin{center}
\includegraphics[width=3in, height=2.8in]{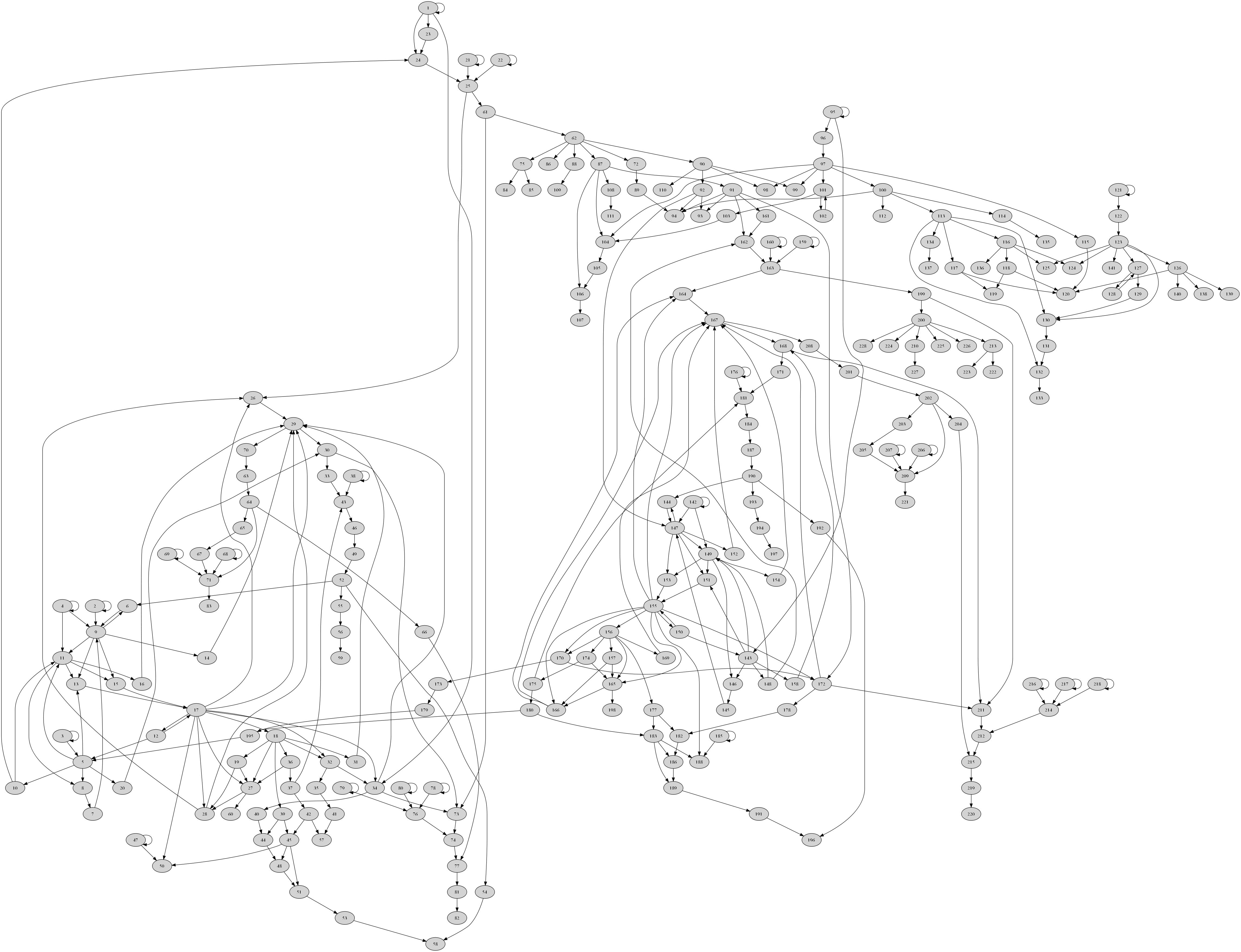}
\caption{{\footnotesize The dependency graph of a Boolean network with $228$ nodes and $25165824$ fixed points. Zoom in for detail}} \label{Fig. 1}
\end{center}
\end{figure}  
\par
\medskip
{\bf Example 3.5.} This is a simulated Boolean network with $450$ nodes and $2050$ interactions. The dependency graph (Fig. 2) is generated from the polynomial functions (see supplement MAPLE printout). With the threshold level at $<21$, the whole system was divided into $48$ subsystems, the dividing time was $19.16$ seconds, in which $12.84$ seconds were used in reading the inputed network. The total time for solving these $48$ subsystems was $8.3435$ minutes, and the total time for combining these solutions was $1.7045$ minutes.  $6$ fixed points were detected.
\par
\begin{figure}[h]
\begin{center}
\includegraphics[width=3in, height=3in]{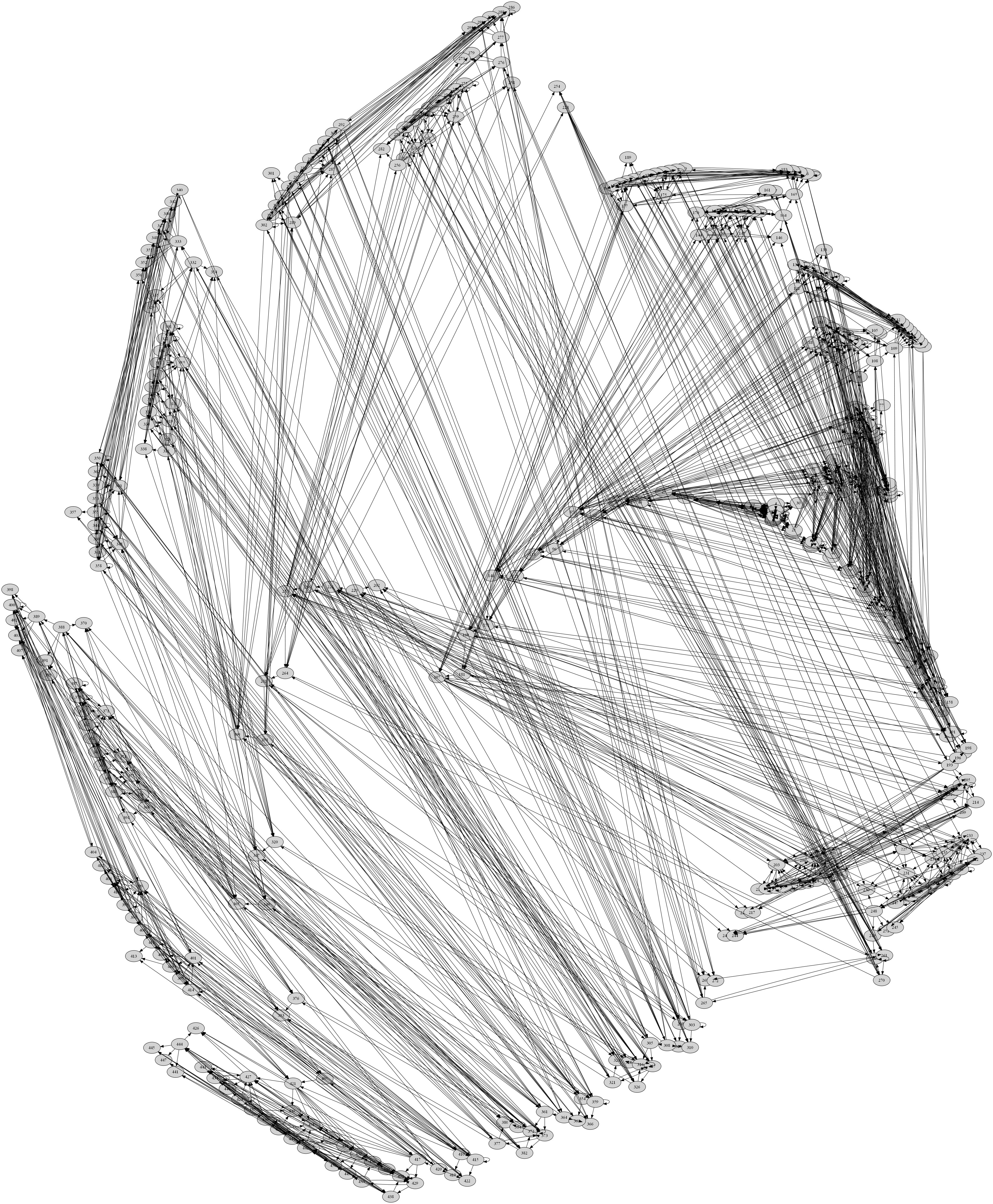}
\caption{{\footnotesize The dependency graph of a Boolean network with $450$ nodes and $2050$ directed edges. Zoom in for detail.}} \label{Fig. 2}
\end{center}
\end{figure}  
\par
\section{Concluding Remarks}
\par
\par
We have developed a new approach to solve systems of Boolean equations. With the computation of the fixed points of complex biological Boolean networks in mind, we developed our approach based on the characteristic of these networks, though it also applies to Boolean networks broadly. Our algorithm is self-contained, not an application of other algorithms, and thus it applies to Boolean networks beyond those have been considered before.  The approach is especially adaptable to large networks assembled from smaller components \cite{Purn09}, since these networks are naturally divisible. To demonstrate the effectiveness of our algorithm, we provided several examples of Boolean networks. The first two examples were included to show that exhaustive enumeration method can solve this problem for Boolean networks of sizes between $20$ and $30$ in less than a second with today's standard PCs, which provides the supporting evidence for our approach. The third example is the Boolean network published in \cite{SR07}. According to the authors, this network was the largest Boolean model of a cellular network known to them at the time of publication. Our algorithm used less than $30$ seconds to detect all fixed points of this Boolean network using MAPLE. The two simulated examples are substantially larger than the one in \cite{SR07}.  Thus we believe that our method will offer a useful tool for analyzing Boolean models, in particular, Boolean networks of biological systems.
\par


\section*{Appendix}
We provide the correspondences between gene names and the variables for the three Boolean networks cited from the literature. We refer the reader to the references for the original networks. The lengthy polynomial systems of Example 4 and 5 are provided in the MAPLE sheets.
\subsection*{Boolean Networks of \cite{Albe03}}
\par
We introduce the variables as follows:
\par
\begin{table}[h]
  \begin{center}
  {\renewcommand{\arraystretch}{1.4}\small {\footnotesize
  \begin{tabular}{|c|c|c|c|c|c|c|c|}
    \hline
    SLP & $wg$ & WG & $en$ & EN & $hh$& HH \\\hline
    $x_1$ & $x_2$ & $x_3$ & $x_4$ & $x_5$ & $x_6$ & $x_7$ \\\hline \hline
    $ptc$ & PTC & PH & SMO & $ci$ & CI & CIA\\ \hline
    $x_8$ & $x_9$ & $x_{10}$ & $x_{11}$ & $x_{12}$ & $x_{13}$ & $x_{14}$\\ \hline\hline
    CIR & WG$_{i-1}$ & WG$_{i+1}$ & HH$_{i-1}$ & HH$_{i+1}$ & $hh_{i-1}$ & $hh_{i+1}$ \\\hline
    $x_{15}$ & $x_{16}$ & $x_{17}$ & $x_{18}$ & $x_{19}$ & $x_{20}$ & $x_{21}$ \\\hline 
 \end{tabular}}
  }
  \label{tab: pairing1}
  \vskip 0.5cm
  \caption[Legend of variable names.]{\tiny Legend of variable names of the Boolean network in \cite{Albe03}.}
    \end{center}
\end{table} 
\par
Then the Boolean network is given by the following polynomial functions:
{\footnotesize \begin{align*}
f_1&=x_1,\quad
f_2 =(x_{15}+1)(x_1(x_2+x_{14})+x_2x_{14}),\quad
f_3 =x_2,\\
f_4&=x_1(x_{16}(x_{17}+1)+x_{17})+x_{16}(x_{17}+1)+x_{17},\\
f_5&=x_4,\quad
f_6 =x_5(x_{15}+1),\quad
f_7 =x_6,\\
f_8&=(x_4+1)x_{13}((x_{11}+1)(x_{20}(x_{21}+1)+x_{21})+x_{11}),\\
f_9&=(x_8+1)x_9(x_{18}+1)(x_{19}+1)+x_8,\\
f_{10}&=((x_8+1)x_9(x_{18}+1)(x_{19}+1)+x_8)(x_{20}(x_{21}+1)+x_{21}),\\
f_{11}&=f_{9}+f_{10}+1,\quad
f_{12} =x_5+1,\quad
f_{13} =x_{12},\\
f_{14}&=x_{13}((x_{11}+1)(x_{21}+1)(x_{20}+1)+1),\\
f_{15}&=f_{14}+x_{13},\quad
f_{i} =x_i\ \text{for }16\le i\le 21.
\end{align*}}
\par
\subsection*{The Boolean Network of \cite{ZhangR08}}
\par
This is the Boolean network given by the diagram of Fig. 2B in \cite{ZhangR08}. We introduce the variables as follows:
\begin{table}[h]
  \begin{center}
  {\renewcommand{\arraystretch}{1.4}\small{\tiny 
  \begin{tabular}{|c|c|c|c|c|c|c|c|c|}
    \hline
   IL15	&  RAS	& ERK	&    JAK	& IL2RBT	& STAT3	& IFNGT	& FasL\\ \hline 	
   $x_1$ & $x_2$ & $x_3$ & $x_4$ & $x_5$	& $x_6$	& $x_7$	& $x_8$\\ \hline\hline

   PDGF  &  PDGFR &   PI3K	 &   IL2	&   BcIxL	  &       TPL2	  &  SPHK &	S1P	  \\ \hline	
   $x_9$ & $x_{10}$ & $x_{11}$ & $x_{12}$ & $x_{13}$ & $x_{14}$ & $x_{15}$ & $x_{16}$ \\ \hline\hline

   sFas   & Fas &  DISC  &  Caspase  &  Apoptosis	&\cellcolor{Mygrey} &\cellcolor{Mygrey} &\cellcolor{Mygrey}\\ \hline
   $x_{17}$ & $x_{18}$ & $x_{19}$ & $x_{20}$ & $x_{21}$ & \cellcolor{Mygrey} &\cellcolor{Mygrey} &\cellcolor{Mygrey}\\ \hline\hline
   LCK	 &  MEK	  &  GZMB	&  IL2RAT	&   FasT	&  RANTES	&  A20	&   FLIP\\ \hline	
  $x_{22}$ & $x_{23}$ & $x_{24}$ & $x_{25}$ & $x_{26}$ & $x_{27}$ & $x_{28}$ & $x_{29}$\\\hline 
 \end{tabular}}
  }
  \label{tab: pairing2}
  \vskip 0.5cm
  \caption[Legend of variable names.]{\footnotesize Legend of variable names of the Boolean network in \cite{ZhangR08}.}
    \end{center}
\end{table} 
\par
Then the Boolean network is given by 
{\footnotesize \begin{align*}
f_1 &= f_2 = f_4 = f_5 = f_{22} = x_1,\quad
f_3  = f_{23} = x_2,\\
f_6 &= f_{24} = x_4,\quad
f_7  = x_5 + x_6 + x_5x_6,\\
f_8 &= x_6(x_3 + x_5 + x_3x_5) + x_{14} + x_6(x_3 + x_5 + x_3x_5)x_{14},\\
f_9 &= f_{10} = x_9,\quad
f_{11}  = x_{10},\\
f_{12} &= f_{13} = x_4 + x_{11} + x_4x_{11} +1,\quad
f_{14}  = f_{29} = x_{11},\\
f_{15} &= x_{11} + x_{16} + x_{11}x_{16},\quad
f_{16}  = f_{17} = x_{15},\\
f_{18} &= x_{17} + 1 + (x_1 + 1)(x_{11} + 1) + (x_{17} + 1)(x_1 + 1)(x_{11} + 1),\\
f_{19} &= x_{18},\quad
f_{20}  = (x_1 + 1)x_{19},\quad
f_{21}  = x_{20},\\
f_{25} &= x_{12},\quad
f_{26}  = f_{27}= f_{28} = x_{14}.
\end{align*}}
\par
\medskip
\newpage
\subsection*{The Boolean network of \cite{SR07}}
We introduce the variables as follows:
\begin{table}[h]
  \begin{center}
  {\renewcommand{\arraystretch}{1.4}\small{\tiny 
  \begin{tabular}{|c|c|c|c|c|c|c|c|c|c|}
    \hline
  
CD28   &     CD4 &	TCRIig &	CD45 &	TCRb &	SHP1 &	Csk &	PAG \\ \hline
$x1$ &	$x2$ &	$x3$ &	$x4$ &	$x5$ &	$x6$ &	$x7$ &	$x8$ \\ \hline\hline

Lckp1 &	Lckp2   & Fyn  &	CCbIp1  & TCRp &	RIK &	AbI &	cCbIp2  \\ \hline
$x9$ &	$x10$ & $x11$ &	$x12$ &	$x13$ &	$x14$ &	$x15$ &	$x16$  \\ \hline\hline

ZAP70 &	LAT &	Gads &	DGK  & SHIP-1 &	PTEN &	CbIb & 	PI3K  \\ \hline
$x17$ &	$x18$ &	$x19$ &	$x20$ & $x21$ &	$x22$ &	$x23$ & 	$x24$  \\ \hline\hline

PIP3 &	ItK &	Gab2 &	SLP76 &	PLCga &	DAG &  PLCgb & sh3bp2 \\ \hline
$x25$ &	$x26$ &	$x27$ &	$x28$ &	$x29$ &	$x30$ & $x31$ &	$x32$  \\ \hline\hline

 RasGRP1 & Vav1 & Vav3 &	Grb2 &	Sos  &	GAP5 &	HPK1 &	Rac1p1\\ \hline
$x33$ &	 $x34$ &	$x35$ &	$x36$ &	$x37$ &	$x38$ &	$x39$ &	$x40$\\ \hline\hline

Rac1p2 & Cdc42 &  Ra5 &	MLK3 &	MEKK1 & Raf &	Gadd45 &  MKK4  \\ \hline
$x41$ &	$x42$ &	$x43$ &	$x44$ &	$x45$ &	$x46$  &	 $x47$ &	  $x48$ \\ \hline\hline

MEK &  P38  &   JNK  &       ERK &   Jun &	Fos &	Rsk &	CREB  \\ \hline
$x49$ & $x50$  & $x51$ &	$x52$ &	$x53$ &	$x54$ &	$x55$ &	$x56$  \\ \hline

SRE &	AP1 &	CRE &	SHP2  & PDK1 &	PKB &	Ca &	CaM  \\ \hline
$x57$ &	$x58$ &	$x59$ &	$x60$  & $x61$ &	$x62$ &	$x63$ &	$x64$  \\ \hline\hline

CaMK4 & CaMK2 & CabIn1 & AKAP79 & CaIpr1 & IP3 & CaIcIn &	BAD \\ \hline
$x65$ &	$x66$ &	$x67$ &	$x68$ &	 $x69$ &	$x70$ & $x71$ &	$x72$  \\ \hline\hline

PKCth &	Ikkg &	GSK3 &	CARD11a & Ikkab &  CARD11  & BcI10 & MaIt1 \\ \hline
$x73$ &	$x74$ &	$x75$ &	$x76$ &	    $x77$ &  $x78$ &  $x79$ &  $x80$\\ \hline\hline

IkB &	NFkB &	NFAT &	bcat &	Cyc1 &	P21c &	p27k &	FKHR\\ \hline
$x81$ &	$x82$ &	$x83$ &	$x84$ &	$x85$ &	$x86$ &	$x87$ &	$x88$  \\ \hline\hline

 BcIXL &	p70S6k &\cellcolor{Mygrey} &\cellcolor{Mygrey} &\cellcolor{Mygrey} &\cellcolor{Mygrey} &\cellcolor{Mygrey} &\cellcolor{Mygrey} \\ \hline
$x89$ &	$x90$ &\cellcolor{Mygrey} &\cellcolor{Mygrey} &\cellcolor{Mygrey}&\cellcolor{Mygrey} &\cellcolor{Mygrey} &\cellcolor{Mygrey}\\ \hline
 \end{tabular}}
  }
  \label{tab: pairing3}
  \vskip 0.5cm
  \caption[Legend of variable names.]{\footnotesize Legend of variable names of the Boolean network in \cite{SR07}.}
    \end{center}
\end{table} 
\par\medskip
\newpage
Then the Boolean network is given by
{\footnotesize \begin{align*}
f_{1}  &= x_{1},\quad 
f_{2}  = x_{2},\quad 
f_{3}  = x_{3},\quad 
f_{4}  = x_{4},\quad \\
f_{5}  &= x_{3}*(x_{12}+1),\quad 
f_{6}  = x_{9}*(x_{52}+1),\quad 
f_{7}  = x_{8},\quad \\
f_{8}  &= x_{5}+1+x_{5}*x_{11},\quad 
f_{9}  = x_{2}*x_{4}*(x_{6}+1)*(x_{7}+1),\quad 
f_{10}  = x_{5},\quad \\
f_{11}  &= x_{5}*(x_{9}+x_{10}+x_{9}*x_{10})+x_{4}*x_{9}*(x_{5}+1),\quad \\
f_{12} & = x_{17},\quad 
f_{13}  = x_{5}*(x_{9}+x_{11}+x_{9}*x_{11}),\quad \\
f_{14} & = x_{9},\quad 
f_{15}  = x_{9}+x_{11}+x_{9}*x_{11},\quad 
f_{16}  = x_{11},\quad \\
f_{17}  &= (x_{12}+1)*x_{13}*x_{15},\quad 
f_{18}  = x_{17},\quad 
f_{19}  = x_{18},\quad \\
f_{20} & = x_{5},\quad 
f_{21}  = x_{21},\quad 
f_{22}  = x_{22},\quad 
f_{23}  = x_{1}+1,\quad \\
f_{24}  &= (x_{1}+x_{10}+x_{1}*x_{10})*(x_{23}+1),\quad 
f_{25}  = x_{24}*(x_{21}+1)*(x_{22}+1),\quad \\ 
f_{26}  &= x_{17}*x_{25}*x_{28},\quad 
f_{27}  = x_{17}*x_{18}*(x_{19}+x_{36}+x_{19}*x_{36}),\quad \\
f_{28}  &= x_{17}*x_{19}*(x_{27}+1),\quad \\
f_{29}  &= x_{17}*x_{28}*x_{31}*x_{34}*(x_{26}+x_{14}*(x_{16}+1)+x_{26}*x_{14}*(x_{16}+1)),\quad \\
f_{30}  &= x_{29}*(x_{20}+1),\quad 
f_{31}  = x_{18},\quad 
f_{32}  = x_{17}*x_{18},\quad 
f_{33}  = x_{30},\quad \\
f_{34}  & = x_{1}+x_{17}*x_{32}+x_{1}*x_{17}*x_{32},\quad \\
f_{35}  &= x_{32},\quad 
f_{36}  = x_{18},\quad 
f_{37}  = x_{36},\quad 
f_{38}  = x_{38},\quad \\
f_{39}  &= x_{18},\quad 
f_{40}  = x_{34},\quad 
f_{41}  = x_{35},\quad 
f_{42}  = x_{37},\quad \\
f_{43}  &= x_{33}*x_{37}*(x_{38}+1),\quad 
f_{44}  = x_{39}+x_{40}+x_{40}*x_{39},\quad \\
f_{45}  &= x_{39}+x_{42}+x_{39}*x_{42},\quad 
f_{46}  = x_{43},\quad 
f_{47}  = x_{47},\quad \\
f_{48} & = x_{44}+x_{45}+x_{44}*x_{45},\quad 
f_{49}  = x_{46},\quad \\
f_{50} & = x_{45}+x_{17}*(x_{47}+1)+x_{45}*x_{17}*(x_{47}+1),\quad \\
f_{51}  &= x_{45}+x_{48}+x_{45}*x_{48},\quad 
f_{52}  = x_{49},\quad \\
f_{53}  &= x_{51},\quad 
f_{54}  = x_{52},\quad 
f_{55}  = x_{52},\quad 
f_{56}  = x_{55},\quad \\
f_{57}  & = x_{41}+x_{42}+x_{41}*x_{42},\quad 
f_{58}  = x_{53}*x_{54},\quad \\
f_{59}  &= x_{56},\quad 
f_{60}  = x_{27},\quad 
f_{61}  = x_{25},\quad 
f_{62}  = x_{61},\quad \\
f_{63}  &= x_{70},\quad 
f_{64}  = x_{63},\quad 
f_{65}  = x_{64},\quad 
f_{66}  = x_{64},\quad \\
f_{67}  &= x_{65}+1,\quad 
f_{68}  = x_{68},\quad 
f_{69}  = x_{69},\quad 
f_{70}  = x_{29},\quad \\
f_{71}  &= x_{64}*(x_{67}+1)*(x_{68}+1)*(x_{69}+1),\quad \\
f_{72}  &= x_{62}+1,\quad 
f_{73}  = x_{30}*x_{34}*x_{61},\quad 
f_{74}  = x_{73}*x_{76},\quad \\
f_{75}  & = x_{62}+1,\quad 
f_{76}  = x_{78}*x_{79}*x_{80},\quad 
f_{77}  = x_{66}*x_{74},\quad \\
f_{78}  & = x_{78},\quad 
f_{79}  = x_{79},\quad 
f_{80}  = x_{80},\quad 
f_{81}  = x_{77}+1,\quad \\
f_{82}  & = x_{81}+1,\quad 
f_{83}  = x_{71},\quad 
f_{84}  = x_{75}+1,\quad 
f_{85}  = x_{75}+1,\quad \\
f_{86}  & = x_{62}+1,\quad 
f_{87}  = x_{62}+1,\quad 
f_{88}  = x_{62}+1,\quad \\
f_{89}  &= x_{72}+1,\quad 
f_{90}  = x_{62}.
\end{align*}}
\end{document}